\documentclass[preprint,authoryear, 3p, 12pt]{elsarticle}




\usepackage{amssymb}





\journal{}

\newtheorem{theorem}{Theorem}[section]
\newtheorem{prop}[theorem]{Proposition}

\newtheorem{lemma}[theorem]{Lemma}

\newtheorem{define}[theorem]{Definition}
\newtheorem{example}[theorem]{Example}

\newenvironment{proof}{\noindent {\em Proof:\ }}{$\square$}

\def\gr{{Gr\"obner }}
\def\N{{\mathbb{N}}}
\def\Q{{\mathbb{Q}}}
\def\xi{{\frac{\partial}{\partial x_i}}}
\def\x1{{\frac{\partial}{\partial x_1}}}

\def\R{{\mathcal R}}
\def\A{{\mathcal A}}
\def\B{{\mathcal B}}
\def\J{{\mathcal J}}
\def\deg{{\rm deg}}
\def\lc{{\rm lc}}
\def\init{{\rm init}}

\def\max{{\rm max}}
\def\s{{\bar{s}}}

\def\sp{{\rm SPoly}}
\def\csp{{\rm CSPoly}}
\def\syz{{\rm Syz}}

\newcommand{\ignore}[1]{}

\begin{document}

\begin{frontmatter}



\title{On Computing \gr Bases in
Rings of Differential Operators}


\author{Xiaodong Ma, Yao Sun and Dingkang Wang}

\address{Key Laboratory of Mathematics Mechanization, Academy of Mathematics and Systems Science, CAS, Beijing 100190,  China\\
(maxiaodong, sunyao)@amss.ac.cn, dwang@mmrc.iss.ac.cn}

\begin{abstract}
Insa and Pauer presented  a basic theory of \gr basis for
differential operators  with coefficients in a commutative ring in
1998, and a criterion was proposed to determine if  a set of
differential operators is a \gr basis. In this paper, we will give
a new criterion such that Insa and Pauer's criterion could be
concluded as a special case and one could compute the \gr basis
more efficiently by this new criterion.
\end{abstract}

\begin{keyword}

\gr basis, rings of differential operators.
\end{keyword}

\end{frontmatter}



\section{Introduction}

Many investigations have been done on \gr basis in rings of
differential operators \citep{Adams94, Bjork79, Galligo85, Mora86,
Oaku94}, but the coefficients are in fields (of rational
functions), rings of power series, or rings of polynomials over a
field. For example, Mora gave an introduction to commutative and
non-commutative \gr bases, which includes \gr bases for Wely
algebra \citep{Mora94}. As in Insa and Pauer's paper, the rings of
coefficients in this paper are general commutative rings, which is
the main difference from other existing works.

In Insa and Pauer's paper \citep{Insa98}, the results of
Buchberger on \gr basis in polynomial rings have been extended to
the theory of \gr basis for differential operators. A criterion
was presented to determine if a set of differential operators is a
\gr basis, and a basic method for computing the \gr basis was also
proposed. Pauer generalized the theory to a class of rings which
includes rings of differential operators with coefficients in
noetherian rings \citep{Pauer07}.

For computing the \gr basis of a set of differential operators,
instead of computing the generators of the syzygy module generated
by their initials, Insa and pauer's method needs to compute the
generators of  many syzygy modules generated by their leading
coefficients. Thus, Insa and pauer's method leads to many
unnecessary computations. In order to improve the efficiency, Zhou
and Winkler proposed some techniques to reduce the computations on
the syzygies \citep{Zhou07}.

In this paper, a new criterion is proposed for computing \gr basis
in the ring of differential operators with coefficients in a
general commutative ring.

The new criterion bases on the following simple fact: Let $f$, $g$
be two differential operators, then $$fg=gf +h,$$ where $fg$ and
$gf$ have the same degree, but $h$ has less degree than $fg$ or
$gf$. The above equation implies that even though the
multiplication in the rings of differential operators is not
commutative, the products $fg$ and $gf$ still have the same
initial. According to this fact, it suffices to consider the
generators of the syzygy module in a commutative ring which is
deduced from the ring of differential operators. With these
generators, a new criterion is proposed to determine if a set of
differential operators is a \gr basis.  This new result
generalizes the Insa and Pauer's original theorem such that their
theorem can be concluded as a special case of the new theorem.
Furthermore, the results of this paper can extend naturally to the
rings that preserve the same fact.

Then the proposed criterion also leads to an efficient method for
computing Gr\"obner bases in the rings of differential operators.
This new method considers fewer s-polynomials than those in Insa
and Pauer's method as well as Zhou and Winkler's improved version.
So it is not surprising that this new method will have better
efficiency than others.

This paper is organized as follow. Section 2 includes some
preliminaries of the \gr Basis in the rings of differential
operators. The Insa and Pauer's theorem comes in section 3. In
section 4, the new criterion is presented in detail.  And some
algorithmic problems are discussed in section 5. The paper is
concluded in section 6.

\section{\gr Basis in Rings of Differential Operators}

Let $K$ be a field of characteristic zero, $\N$ the set of
non-negative integers, $n\in \N$ a positive integer and
$K[X]:=K[x_1,\cdots,x_n]$ (resp. $K(X):=K(x_1, \cdots, x_n)$) the
ring of polynomials (resp. the field of rational functions) in $n$
variables over $K$. Let $\xi:K(X)\longrightarrow K(X)$ be the
partial derivative by $x_i$ for $1\le i \le n$.

Let $\R$ be a noetherian $K$-subalgebra of $K(X)$ which is stable
by $\xi$ for $1 \le i \le n$, i.e. $\xi(r)\in \R$ for all $r\in
\R$. Important examples for $\R$ are $K[X]$, $K(X)$ and
$K[X]_M:=\{\frac{f}{g}\in K(X)  \mid  f\in K[X], g\in M\}$ where
$M$ is a subset of $K[X]\setminus \{0\}$ closed under
multiplication.

Assume the linear equations over $\R$ can be solved, i.e.
\begin{enumerate}
\item[(1)] for all $g\in \R$ and all finite subsets $F\subset \R$,
it is possible to decide whether $g$ is an element of
${}_{\R}\langle F \rangle$, and if yes, it is available to obtain
a family $(d_f)_{f\in F}$ in $\R$ such that $g=\sum_{f\in F}d_f
f$; \item[(2)] for all finite subsets $F\subset \R$, a finite
system of generators of the $\R$-module
$$\{(s_f)_{f\in F}  \mid  \sum_{f\in F}s_f f=0, s_f \in \R\}$$ can be
computed.
\end{enumerate}

The partial differential operator $D_i$ is defined as the
restriction of $\xi$ to $\R$ for $1\le i \le n$. Let
$\A:=\R[D]=\R[D_1, \cdots, D_n]$ be the $\R$-subalgebra of
$End_k(\R)$ generated by $id_{\R}=1$ and $D_1, \cdots, D_n$. Then
the ring $\A$ is ``a ring of differential operators with
coefficients in $\R$", while the elements of $\A$ are called
``differential operators with coefficients in $\R$" [Insa and
Pauer 1998]. It is well known that $\A$ is a left-neotherian
associative $\R$-algebra, so the ideals in $\A$ always refer to
the left-ideals of $\A$ in this paper.

$\A$ is a non-commutative $K$-algebra with fundamental relations:
$$x_ix_j=x_jx_i \mbox{, } D_iD_j=D_jD_i \mbox{ for } 1\le i, j \le n,$$
and $$D_ir-rD_i=D_i(r), r\in \R.$$ For a simple example, let
$\A=(k[x_1, x_2])[D_1, D_2]$, then $$x_1x_2=x_2x_1\mbox{, }
D_1D_2=D_2D_1\mbox{ and }D_1x_1x_2-x_1x_2D_1=D_1(x_1x_2)=x_2.$$
And for any $f\in \A$, $f$ can be written uniquely as a finite sum
$$f=\sum_{\alpha \in \N^n} r_{\alpha}D^{\alpha},\mbox{ where }
r_{\alpha}\in \R.$$

Let $\prec$ be an admissible order on $\N^n$, i.e. a total order
on $\N^n$ such that $0\in \N^n$ is the smallest element and
$\alpha \prec \beta$ implies $\alpha+\gamma \prec \beta+\gamma$
for all $\alpha, \beta, \gamma \in \N^n$. Then for a differential
operator $0\not= f=\sum_{\alpha \in \N^n} r_{\alpha}D^{\alpha} \in
\A$, the degree, leading coefficient and initial are defined as:
$$\deg(f):=\max_{\prec}\{\alpha \mid r_{\alpha}\not=0\} \in \N^n,$$
$$\lc(f):=r_{\deg(f)},$$
$$\init(f):=\lc(f)D^{\deg(f)}.$$
If $F$ is a subset of $\A$, define:
$$\deg(F):=\{\deg(f) \mid f\in F, f\not=0\},$$
$$\init(F):=\{\init(f) \mid f\in F, f\not=0\}.$$

It is easy to check $\A$ has the following properties. Let $f, g,
h\in \A$:
\begin{enumerate}
\item[] {\bf Associativity:} $$(fg)h=f(gh).$$

\item[] {\bf Distributivity:} $$f(g+h)=fg+fh\mbox{ and }
(f+g)h=fh+gh.$$
\end{enumerate}
There is another property about $\A$ which will be used frequently
in this paper. Let $\init(f)=r_fD^{\alpha_f}$ and
$\init(g)=r_gD^{\alpha_g}$, $r_f, r_g\in \R$, then
$$\deg(fg)=\deg(f)+\deg(g)\mbox{, }\lc(fg)=\lc(f)\lc(g)\mbox{ and }\init(fg)=r_fr_gD^{\alpha_f}D^{\alpha_g}.$$
Therefore, $\A$ also has a {\bf Quasi-Commutativity:}
$$\deg(fg-gf)\prec \deg(fg)=\deg(gf). $$

Then the \gr basis in the rings of differential operators with
coefficients in $\R$ is defined as:

\begin{define}
Let $\J$ be an ideal in $\A$ and $G$ a finite subset of
$\J\setminus\{0\}$. Then $G$ is a \gr basis of $\J$ w.r.t. $\prec$
iff for all $f\in \J$, $$\lc(f)\in {}_{\R}\langle \lc(g) \mid g\in
G, \deg(f)\in \deg(g)+\N^n \rangle.$$
\end{define}

\begin{example}
If $\J={}_{\A}\langle f \rangle\subset \A$ and $f\not=0$, then
$\{f\}$ is a \gr basis of $\J$.
\end{example}

\section{Insa and Pauer's Theorem}

In order to compute the \gr basis, a division (or reduction) in
$\A$ is necessary. In theory, there may exist various kinds of
divisions in $\A$. The following division is the one presented by
Insa and Pauer in \citep{Insa98}.

\begin{prop}[Division in $\A$] Let $F$ be a finite subset of
$\A\setminus\{0\}$ and $g \in \A$. Then there exist a differential
operator $r\in \A$ and a family $(h_f)_{f\in F}$ in $\A$ such
that:
\begin{enumerate}
\item $g=\sum_{f\in F}h_ff+r, \ \ \mbox{(} r \mbox{  is ``a
remainder of } g \mbox{ after division by } F \mbox{"),}$

\item for all $f\in F$, $h_f=0$ or $\deg(h_ff)\preceq \deg(g), $

\item $r=0$ or $\lc(r) \notin {}_{\R} \langle \lc(f)\mid
\deg(r)\in \deg(f)+ \N^n \rangle.$
\end{enumerate}
\end{prop}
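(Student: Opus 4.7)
The plan is to prove the proposition by iterative reduction of the leading term of $g$, modelled on the classical division algorithm in polynomial rings but with the twist coming from non-commutativity of $\A$. At each stage I look at the leading coefficient $\lc(g)$ and ask whether it lies in the $\R$-submodule generated by the leading coefficients of those $f\in F$ whose degree divides $\deg(g)$ in the monoid $\N^n$. If yes, we can cancel $\init(g)$ by subtracting a suitable combination of left multiples; if no, we stop and declare what remains to be the remainder $r$. Termination rides on the fact that admissible orders are well-orders on $\N^n$.

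More concretely, I would begin with the trivial base case $g=0$, setting $r:=0$ and $h_f:=0$ for all $f$. Otherwise, let $\gamma:=\deg(g)$ and $F':=\{f\in F:\gamma\in\deg(f)+\N^n\}$, and split into two cases. If $\lc(g)\notin {}_{\R}\langle\lc(f):f\in F'\rangle$, then $r:=g$, $h_f:=0$ already satisfies (1), (2), (3). Otherwise, hypothesis (1) on $\R$ produces $(d_f)_{f\in F'}\subset\R$ with $\lc(g)=\sum_{f\in F'}d_f\lc(f)$; setting $\delta_f:=\gamma-\deg(f)\in\N^n$, I form
\[ g' := g - \sum_{f\in F'} d_f D^{\delta_f} f. \]
By quasi-commutativity, $\init(d_f D^{\delta_f}f)=d_f\lc(f)D^{\gamma}$, so the leading terms telescope to cancel $\init(g)$ and $\deg(g')\prec\gamma$. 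I then recurse on $g'$, obtaining $r'$ and $(h'_f)$, and assemble the output $r:=r'$, $h_f:=d_f D^{\delta_f}+h'_f$ (with $d_f:=0$ for $f\notin F'$). Property (1) is immediate by expansion, (3) is inherited from the recursion, and (2) follows since $\deg(d_f D^{\delta_f}f)=\gamma\preceq\deg(g)$ and inductively $\deg(h'_f f)\preceq\deg(g')\prec\deg(g)$.

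Termination is guaranteed because each reduction step strictly decreases the degree of the current operator in the admissible order $\prec$, which is a well-order on $\N^n$ by Dickson's lemma. The main technical obstacle is justifying the leading-term cancellation in the non-commutative ring $\A$: one might worry that replacing $f$ by $D^{\delta_f}f$ introduces spurious cross-terms whose leading coefficients interfere with the ones computed via the commutative linear algebra over $\R$. The saving observation, made possible precisely by quasi-commutativity, is that any such interference lives in degrees strictly below $\gamma$, so the $\R$-coefficients $d_f$ obtained from the commutative expression $\lc(g)=\sum d_f\lc(f)$ are enough to annihilate $\init(g)$ in $\A$. This decoupling of the top-degree part from the lower-order tail is the only nontrivial step; the remainder of the argument is bookkeeping with the induction hypothesis.
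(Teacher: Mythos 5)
Your proof is correct and is the standard division-algorithm argument. The paper states this proposition without giving its own proof (it is cited from Insa and Pauer, 1998), and the argument there is essentially what you wrote: reduce the leading term when $\lc(g)$ lies in the $\R$-module generated by the eligible $\lc(f)$, using the solvability assumption on $\R$ to find the coefficients $d_f$, and relying on quasi-commutativity so that $\init(d_f D^{\delta_f} f)=d_f\lc(f)D^\gamma$ cancels $\init(g)$, with termination from the well-order property of admissible orders on $\N^n$. One point worth making explicit for a clean write-up: the algorithm stops and returns $r:=g$ as soon as $\lc(g)$ is not reducible, without attempting to reduce lower-order terms of $g$; this is exactly why condition (iii) only constrains $\lc(r)$ and not the full support of $r$, and your proposal correctly exploits that.
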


This definition of division in $\A$ is also used in the new
theorem presented in the next section. Based on this division, a
\gr basis in $\A$ has the following property \citep{Insa98}.

\begin{prop} \label{Insaporp2}
Let $\J$ be an ideal in $\A$, $G$ a \gr basis of $\J$ and $f\in
\A$. Then $f\in \J$ iff a remainder of $f$ after division by $G$
is zero.
\end{prop}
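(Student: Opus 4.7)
The proposition has a standard two-part structure, and I would tackle it the same way one handles the analogous fact in the polynomial ring case, using Definition 2.1 of Gröbner basis together with the three properties (1)--(3) of the division in $\A$ stated in Proposition 3.1.

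The ``if'' direction is the painless one. Suppose a remainder of $f$ after division by $G$ is zero. Then property (1) of Proposition 3.1 gives a representation $f=\sum_{g\in G}h_g g$ with $h_g\in\A$, and since $G\subset \J$ and $\J$ is a left ideal of $\A$, each $h_g g$ lies in $\J$, hence so does $f$. No subtlety here; I just expand the definition.

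For the ``only if'' direction I would argue by contradiction. Assume $f\in\J$, and let $r$ be a remainder of $f$ after division by $G$, with corresponding cofactors $(h_g)_{g\in G}$ from Proposition 3.1. From property (1) we have
\[
r \;=\; f \;-\; \sum_{g\in G} h_g\, g,
\]
and since $f\in\J$ and each $g\in G\subset\J$, the element $r$ lies in $\J$. Now suppose for contradiction that $r\neq 0$. Because $r\in\J$ and $G$ is a Gröbner basis of $\J$ with respect to $\prec$, Definition~2.1 applied to $r$ forces
\[
\lc(r)\in {}_{\R}\langle\,\lc(g)\mid g\in G,\ \deg(r)\in \deg(g)+\N^{n}\,\rangle.
\]
But property (3) of Proposition~3.1 says exactly the opposite: since $r\neq 0$,
\[
\lc(r)\notin {}_{\R}\langle\,\lc(g)\mid g\in G,\ \deg(r)\in \deg(g)+\N^{n}\,\rangle.
\]
This contradiction forces $r=0$, which is what we wanted.

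I do not expect any genuine obstacle; the statement is essentially a formal consequence of how ``Gröbner basis'' and ``remainder'' have been defined, and the only thing to be careful about is making sure the remainder $r$ is recognized as lying in $\J$ (which is where the left-ideal hypothesis is used) before the Gröbner basis condition can be applied to it. The mild point worth emphasising in the writeup is that the contradiction pits the two characterisations of $\lc(r)$ directly against each other, so no induction on $\deg(f)$ or Quasi-Commutativity argument is needed for this proposition; those tools are reserved for the harder criterion theorems to follow.
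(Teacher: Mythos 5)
Your proof is correct and is the standard argument that Insa and Pauer give (the paper here only cites their result without reproducing a proof). The ``if'' direction uses the left-ideal property exactly as you say, and the ``only if'' direction correctly pits property~(3) of the division against Definition~2.1 applied to the remainder $r\in\J$; nothing further is needed.
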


Then the next theorem proposed by Insa and Pauer provides a
criterion for checking if a set of differential operators is a \gr
basis.

\begin{theorem}[Insa and Pauer's theorem] \label{theoremIP}
Let $G$ be a finite subset of $\A\setminus \{0\}$ and $\J$ the
ideal in $\A$ generated by $G$. For $E\subset G$, let $S_E$ be a
finite set of generators of the $\R$-module
$$\syz_{\R}(E):=\{(s_e)_{e\in E}  \mid  \sum_{e\in E}s_e
\lc(e)=0\}\subset {}_{\R}(\R^{|E|}).$$ Then the following
assertions are equivalent:
\begin{enumerate}
\item $G$ is a \gr basis of $\J$.

\item For all $E\subset G$ and for all $(s_e)_{e\in E}\in S_E$, a
remainder of $$\sp(E, (s_e)_{e\in E}):=\sum_{e\in
E}s_eD^{m(E)-\deg(e)}e$$ after division by $G$ is zero, where
$$m(E):=(\max_{e\in E}\deg(e)_1, \cdots, \max_{e\in
E}\deg(e)_n)\in \N^n.$$
\end{enumerate}
\end{theorem}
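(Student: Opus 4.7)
The plan is to prove the two implications separately; $(1)\Rightarrow(2)$ is essentially immediate, while $(2)\Rightarrow(1)$ requires a Buchberger-style minimality argument adapted to the quasi-commutative setting of $\A$.

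For $(1)\Rightarrow(2)$, note that $\sp(E,(s_e)_{e\in E})=\sum_{e\in E}s_e D^{m(E)-\deg(e)}e$ is by construction an $\A$-linear combination of elements of $G\supseteq E$, so it lies in $\J$. Proposition~\ref{Insaporp2} then forces any remainder after division by $G$ to vanish.

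For $(2)\Rightarrow(1)$, I would fix $f\in\J$ and a representation $f=\sum_{g\in G}h_g g$, and among all such representations choose one minimizing $\gamma:=\max_\prec\{\deg(h_g g):h_g\neq 0\}$ (this minimum exists since $\prec$ well-orders $\N^n$). It suffices to show $\gamma=\deg(f)$: then $E_\gamma:=\{g\in G:\deg(h_g g)=\gamma\}$ is nonempty, quasi-commutativity gives $\deg(g)\leq\gamma$ componentwise for each $g\in E_\gamma$ (so $\deg(f)\in\deg(g)+\N^n$), and comparing leading coefficients at degree $\gamma$ yields $\lc(f)=\sum_{g\in E_\gamma}\lc(h_g)\lc(g)$, exhibiting the membership required by the definition of a \gr basis. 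So assume for contradiction that $\gamma\succ\deg(f)$. The leading coefficients at degree $\gamma$ must cancel, so $(\lc(h_g))_{g\in E_\gamma}\in\syz_\R(E_\gamma)$ and can be written as $\lc(h_g)=\sum_\tau c_\tau s_g^\tau$ with $\tau=(s_g^\tau)\in S_{E_\gamma}$ and $c_\tau\in\R$. Since $\deg(g)_i\leq\gamma_i$ for every $g\in E_\gamma$, the exponent $\delta:=\gamma-m(E_\gamma)$ lies in $\N^n$. Using $D_iD_j=D_jD_i$ together with quasi-commutativity to swap $D^\delta$ past each scalar $s_g^\tau$ at the cost of lower-order corrections, I would establish
$$\sum_{g\in E_\gamma}\lc(h_g)\,D^{\gamma-\deg(g)}g \;=\; \sum_\tau c_\tau\,D^\delta\,\sp(E_\gamma,\tau) \;+\; R,$$
where $R$ is an $\A$-combination of $G$ of max degree $\prec\gamma$. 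Hypothesis (2) provides, for each $\tau$, a representation $\sp(E_\gamma,\tau)=\sum_{g\in G}h'_g g$ with $\deg(h'_g g)\prec m(E_\gamma)$, so each $D^\delta\sp(E_\gamma,\tau)$ is likewise an $\A$-combination of $G$ of max degree $\prec\gamma$. Substituting these into the decomposition $f=\sum_{g\in E_\gamma}(h_g-\lc(h_g)D^{\gamma-\deg(g)})g+\sum_{g\in E_\gamma}\lc(h_g)D^{\gamma-\deg(g)}g+\sum_{g\notin E_\gamma}h_g g$, whose outer two summands already have max degree $\prec\gamma$, yields a representation of $f$ with max degree strictly below $\gamma$, contradicting minimality.

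The main obstacle I anticipate is the bookkeeping forced by non-commutativity: pulling $D^\delta$ across $s_g^\tau\in\R$ via the relation $D_i r=rD_i+D_i(r)$ spawns commutator terms of degree $\prec\delta$, and one must verify that every such correction can be absorbed into an $\A$-combination of $G$ whose max degree drops strictly below $\gamma$, rather than leaking outside the ideal or preserving the degree $\gamma$. The quasi-commutativity identity $\deg(fg-gf)\prec\deg(fg)$ stated in Section~2 is exactly what makes this work, guaranteeing that every round of swapping produces a strictly smaller head degree and hence that the descent terminates in a representation of $f$ strictly better than the minimizer.
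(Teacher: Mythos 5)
Your proof is correct and follows the same Buchberger-style descent argument that the paper employs in the proof of its main theorem (Theorem~\ref{mainthm}), of which this statement is a special case: fix a representation of $f$ minimizing the top degree $\gamma$, observe that $\gamma\succ\deg(f)$ forces $(\lc(h_g))_{g\in E_\gamma}$ into $\syz_\R(E_\gamma)$, decompose that syzygy via $S_{E_\gamma}$, and pull $D^{\gamma-m(E_\gamma)}$ across the scalar coefficients using quasi-commutativity to rewrite the head in terms of the s-polynomials, which then reduce to zero and drop the degree. (Note the paper itself does not prove Theorem~\ref{theoremIP} directly, citing Insa and Pauer; it proves the generalization Theorem~\ref{mainthm} by the analogous argument phrased in the commutative auxiliary ring $\B$ via the map $\sigma$, whereas you carry out the commutator bookkeeping directly in $\A$.) Your anticipated obstacle is handled correctly: each commutator term $D^\delta s_g^\tau - s_g^\tau D^\delta$ has degree $\prec\delta$, so after multiplying by $D^{m(E_\gamma)-\deg(g)}g$ the resulting correction stays an $\A$-combination of $G$ of head degree $\prec\gamma$.
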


According to this theorem, one is able to compute the \gr basis of
${}_\A \langle F\rangle$ for any subset $F\subset \A$. All needed
to do is to check the remainder of $\sum_{e\in
E}s_eD^{m(E)-\deg(e)}e$ after division by $F$ is zero or not for
all $E\subset F$. If there does exist a remainder $r$ which is not
zero, then expand $F$ to $F':=F\cup \{r\}$ and repeat the process
for $F'$. The procedure terminates exactly when all the remainders
are zero. The terminality of this algorithm can be proved in a
similar way as the general \gr basis algorithm.

During the above computing process, in order to seek non-zero
remainders w.r.t. the subsets of $F$, one needs to compute the
generators of $Syz(E)$ for all $E\subset F$, which is really
expensive. In view of this, Zhou and Winkler proposed a trick to
avoid some unnecessary computations \citep{Zhou07}. In their
paper, they show that if the elements in $E$ have some special
properties, then instead of computing the generators of $Syz(E)$,
it only suffices to calculate the generators of $Syz(E')$ for some
$E'\subset E$. Since the new theorem in the current paper
generalizes Insa and Pauer's theorem in a different way from Zhou
and Winkler, the details of their method are omitted here. For
interesting readers, please see \citep{Zhou07}.

\section{The New Theorem for \gr Basis in Rings of Differential Operators}

The differential operator $$\sp(E, (s_e)_{e\in E})=\sum_{e\in
E}s_eD^{m(E)-\deg(e)}e$$ in (ii) of the Insa and Pauer's theorem
is denoted as a ``generalized s-polynomial" w.r.t. the subset
$E\subset G$ in \citep{Zhou07}, as it plays the same role as the
general s-polynomials.

However, this generalized s-polynomial in Insa and Pauer's theorem
is constructed quite strangely, since it is not created by the
syzygies of $\init(G)$ in the traditional way but results from the
set $S_E$, which is a set of generators of $\{(s_e)_{e\in E} \mid
\sum_{e\in E}s_e \lc(e)=0\}$. With a further study, one will find
the reason easily. That is, the syzygy of $\init(G)$ is extremely
difficult to define and even harder to compute, as $\A$ is a
non-commutative ring. This explains why Insa and Pauer concentrate
on the syzygy of $\lc(E)$ in $\R$ instead.

At this point, it is natural to ask: {\bf do we really need the
syzygy of ${\bf init}(G)$?} The answer is {\bf NO!}

By revisiting the proof of Insa and Pauer's theorem carefully, in
order to show $G$ is a \gr basis, it suffices to consider the
differential operators which are generated by $G$ and possibly
have new initials. So all we need to do is to eliminate the
present initials of $G$ and to try to create all possible new
initials in ${}_{\A}\langle G \rangle$ . Fortunately, the syzygy
of $\init(G)$ is not the only one that could do this job, since
the ring $\A$ has the Quasi-Commutativity.

With these considerations in mind, let discuss a commutative ring
$\B$ first which is deduced from the Quasi-Commutative ring $\A$.

Let $\B:=\R[Y]=\R[y_1, \cdots, y_n]$, which is generated by
$id_{\R}=1$ and $y_1, \cdots, y_n$. $\B$ is a commutative
$K$-algebra with fundamental relations: $$x_ix_j=x_jx_i \mbox{, }
y_iy_j=y_jy_i \mbox{ and } x_iy_j=y_jx_i \mbox{ for } 1\le i, j
\le n.$$ For any $f\in \B$, $f$ can also be written uniquely as a
finite sum $f=\sum_{\alpha \in \N^n} r_{\alpha}Y^{\alpha}$, where
$r_{\alpha}\in \R$. Similarly, the degree, leading coefficient and
initial are defined as: $\deg(f):=\max_{\prec}\{\alpha \mid
r_{\alpha}\not=0\} \in \N^n$, $\lc(f):=r_{\deg(f)}$ and
$\init(f):=\lc(f)Y^{\deg(f)}$ respectively.

Since $Y$ commute with $X$ and the linear equations over $\R$ are
solvable, it is easy to check the linear equations over $\B$ can
be solved as well, which means the generators of
$$\syz_{\B}(F):=\{(s_f)_{f\in F}  \mid  \sum_{f\in F} s_f \init(f)=0, s_f\in
\B\}$$ can be computed, where $F\subset \B\setminus \{0\}$.

With a little care, the only difference between $\B$ and $\A$ is
that $\B$ is commutative and $\A$ is not. The following map
bridges the two rings easily. Let $\sigma$ be a map from $\B$ to
$\A$ such that for any $\sum_{\alpha\in
\N^n}r_{\alpha}Y^{\alpha}\in \B$ where $r_{\alpha}\in \R$,
$$\sigma(\sum_{\alpha\in \N^n}r_{\alpha}Y^{\alpha}) =
\sum_{\alpha\in \N^n}r_{\alpha}D^{\alpha}\in \A.$$ By the
definition of $\sigma$, the following properties hold for all $f,
g\in \B$: $$\deg(f)=\deg(\sigma(f)),$$
$$\lc(f)=\lc(\sigma(f)),$$
$$\sigma(\init(f))=\init(\sigma(f)),$$
$$\deg(fg)=\deg(\sigma(fg))=\deg(\sigma(f)\sigma(g)),$$
$$\lc(fg)=\lc(\sigma(fg))=\lc(\sigma(f)\sigma(g)),$$
$$\sigma(\init(fg))=\init(\sigma(fg))=\init(\sigma(f)\sigma(g)).$$
But remark that $$\sigma(fg)\not=\sigma(f)\sigma(g).$$ It is also
very easy to check $\sigma$ is a {\bf $\R$-homomorphism}, i.e. for
$f, g\in \B$ and $r\in \R$,
$$\sigma(rf+g)=r\sigma(f)+\sigma(g).$$ All the above properties
will be used frequently in the proof of the new theorem.

Before presenting the new theorem, let study some properties of
the ring $\B$ first. These properties will be used in the proof of
the new theorem as well. We start with the following definition.

\begin{define}
An element $(s_f)_{f\in F}\in S(F)$ is {\bf homogeneous of degree
$\alpha$}, where $\alpha \in \N^n$, provided that $$(s_f)_{f\in
F}=(c_fY^{\alpha_f})_{f\in F},$$ where $c_f\in \R$ and
$\alpha_f+\deg(f)=\alpha$ whenever $c_f\not=0$.
\end{define}

The following two lemmas are well-known. For details, please see
\citep{cox}.

\ignore{
\begin{lemma}
Every element of $S(F)$ can be written uniquely as a sum of
homogeneous elements of $S(F)$.
\end{lemma}}

\begin{lemma}
$\syz_{\B}(F)$ has a set of homogeneous generators, i.e. there
exists a finite set $C_F\subset S(F)$ such that each element of
$C_F$ is homogeneous and $\syz_{\B}(F)= {}_{\B}\langle C_F
\rangle$.
\end{lemma}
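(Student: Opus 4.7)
The plan is to exploit the natural $\N^n$-grading on $\B = \R[Y]$ by $Y$-multidegree. Under this grading every initial $\init(f) = \lc(f) Y^{\deg(f)}$ is homogeneous, so the defining relation $\sum_{f \in F} s_f \init(f) = 0$ splits cleanly into independent equations indexed by a single $Y$-degree. The lemma is then an instance of the standard graded-module fact: a graded submodule of a graded free module over a Noetherian (more precisely, solvable) coefficient ring admits a finite homogeneous generating set.

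First I would invoke the remark made just above the lemma, namely that linear equations over $\R$ lift to linear equations over $\B$ (because $Y$ commutes with $X$), so $\syz_{\B}(F)$ admits \emph{some} finite generating set $\{g^{(1)}, \ldots, g^{(k)}\}$ with $g^{(j)} = (s_f^{(j)})_{f \in F}$. Expanding each coordinate as $s_f^{(j)} = \sum_{\alpha \in \N^n} c_{f,\alpha}^{(j)} Y^\alpha$ with $c_{f,\alpha}^{(j)} \in \R$, I define, for each $\beta \in \N^n$, the $\beta$-component
$$g^{(j),\beta} := \bigl(c_{f,\, \beta - \deg(f)}^{(j)}\, Y^{\beta - \deg(f)}\bigr)_{f \in F},$$
with the convention that the entry is $0$ whenever $\beta - \deg(f) \notin \N^n$. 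By construction $g^{(j),\beta}$ is homogeneous of degree $\beta$ in the sense of the preceding definition, and $g^{(j)} = \sum_\beta g^{(j),\beta}$ is a finite sum.

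The key step is to verify that each $g^{(j),\beta}$ itself lies in $\syz_{\B}(F)$. Since $\init(f) = \lc(f) Y^{\deg(f)}$ is a pure monomial in $Y$, the $Y^\beta$-coefficient of $\sum_{f} s_f^{(j)} \init(f)$ is exactly $\sum_f c_{f,\, \beta - \deg(f)}^{(j)} \lc(f)$, and this vanishes because the whole sum does. Hence $\sum_f c_{f,\,\beta-\deg(f)}^{(j)} Y^{\beta-\deg(f)} \init(f) = 0$, showing $g^{(j),\beta} \in \syz_{\B}(F)$. The finite collection $\{g^{(j),\beta} : 1 \le j \le k,\ g^{(j),\beta} \ne 0\}$ then consists of homogeneous syzygies and generates $\syz_{\B}(F)$ as an $\R$-module (in particular as a $\B$-module), since each original $g^{(j)}$ is the sum of its own components. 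The only point that requires any care is the graded-splitting argument in this final paragraph, and even that is routine once the $Y$-grading and the $Y$-homogeneity of $\init(f)$ have been isolated; I do not anticipate any serious obstacle.
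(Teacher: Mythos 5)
The paper does not prove this lemma; it labels it ``well-known'' and refers the reader to Cox, Little, and O'Shea, so there is no in-paper argument to compare against. Your proof is the standard graded-decomposition argument and is correct: finite generation of $\syz_{\B}(F)$ comes from noetherianity of $\B = \R[Y]$, and because each $\init(f) = \lc(f)Y^{\deg(f)}$ is a $Y$-monomial, the $Y^\beta$-coefficient of each relation $\sum_f s_f^{(j)}\init(f) = 0$ vanishes separately, so every homogeneous component of a generator is itself a syzygy, giving a finite homogeneous generating set. One small slip in your closing paragraph: the collection of nonzero components generates $\syz_{\B}(F)$ as a $\B$-module, not as an $\R$-module (the syzygy module is in general not finitely generated over $\R$). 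What you actually established is that each $g^{(j)}$ lies in the $\R$-span of its own homogeneous components, hence ${}_{\B}\langle g^{(j),\beta} : j,\beta \rangle \supseteq {}_{\B}\langle g^{(j)} : j \rangle = \syz_{\B}(F)$; that is exactly what is needed, and the erroneous intermediate claim does no harm to the conclusion.
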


\begin{lemma}\label{homo}
Let $C_F$ be a set of homogeneous generators of $\syz_{\B}(F)$. If
$(s_f)_{f\in F}\in \syz_{\B}(F)$ is homogeneous of degree
$\alpha$, then there exists a family $(r_\s)_{\s\in C_F}$ where
$r_\s\in \B$, such that
$$(s_f)_{f\in F}=\sum_{\s\in C_F}r_\s \s$$ and $r_\s \s$ is
homogeneous of degree $\alpha$ for all $\s\in C_F$.
\end{lemma}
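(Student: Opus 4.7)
The plan is to reduce the statement to the standard decomposition of elements of a graded free module into homogeneous pieces. Equip the free $\B$-module $\B^{|F|}$ (which contains $\syz_\B(F)$) with the natural $\N^n$-grading in which the basis vector associated to $f \in F$ has degree $\deg(f)$; under this grading, every element of $\B^{|F|}$ decomposes uniquely as a finite sum of its homogeneous components, because $\B = \R[Y]$ itself decomposes uniquely into $\R$-linear combinations of the monomials $Y^\beta$.

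First I would use the hypothesis that $C_F$ generates $\syz_\B(F)$ to pick some family $(r_\s)_{\s \in C_F}$ in $\B$ with $(s_f)_{f \in F} = \sum_{\s \in C_F} r_\s \s$, without yet any control on homogeneity. Then I would expand each coefficient as $r_\s = \sum_\beta c_{\s,\beta} Y^\beta$ with $c_{\s,\beta} \in \R$. Since each $\s \in C_F$ is homogeneous, say of degree $d_\s$, a direct check on the components shows that $(c_{\s,\beta} Y^\beta)\,\s$ is again homogeneous, of degree $d_\s + \beta$. The right-hand side is thereby exhibited as a finite sum of homogeneous pieces of various degrees.

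The key step is then to compare degree-$\alpha$ components on both sides. Since $(s_f)_{f \in F}$ is homogeneous of degree $\alpha$ by hypothesis and the homogeneous decomposition in $\B^{|F|}$ is unique, all terms on the right of degree different from $\alpha$ must cancel. Setting
\[
r'_\s := \sum_{\beta:\ \beta + d_\s = \alpha} c_{\s,\beta} Y^\beta
\]
therefore yields the desired representation $(s_f)_{f \in F} = \sum_{\s \in C_F} r'_\s \s$ with each $r'_\s \s$ homogeneous of degree $\alpha$ by construction. I do not anticipate a serious obstacle: the argument is entirely formal once the grading is fixed, and the only mild point is verifying the uniqueness of homogeneous decomposition in $\B^{|F|}$, which is immediate from the definition of $\B = \R[Y]$ as a polynomial ring over $\R$.
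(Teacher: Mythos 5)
Your proof is correct and is essentially the standard argument for this fact; the paper itself does not spell out a proof but simply refers to Cox--Little--O'Shea, where precisely this grading-and-homogeneous-component argument is given. In short: you equip $\B^{|F|}$ with the shifted $\N^n$-grading (basis vector for $f$ placed in degree $\deg(f)$), take any representation $(s_f)=\sum_\s r_\s\s$, expand each $r_\s$ into $\R$-monomials, and extract the degree-$\alpha$ component on both sides using uniqueness of homogeneous decomposition in $\B^{|F|}$. This matches the cited reference's approach, so there is nothing substantively different to compare.
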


Now, it is time to present the new theorem.

\begin{theorem}[Main theorem]\label{mainthm}
Let $G$ be a finite subset of $\A\setminus \{0\}$ and $\J$ the
ideal in $\A$ generated by $G$. For each $g\in G$, assume
$\init(g)=c_gD^{\alpha_g}$ where $c_g\in \R$ and $\alpha_g\in
\N^n$. Let $C_G$ be a set of homogeneous generators of
$\syz_{\B}(H_G)$ where $H_G=\{c_gY^{\alpha_g} \mid g\in G\}\subset
\B$ and $C_G$ is called a set of commutative syzygy generators of
$\init(G)$ for short. Then the following assertions are
equivalent:
\begin{enumerate}
\item $G$ is a \gr basis of $\J$.

\item For all $(s_g)_{g \in G}\in C_G$ where $s_g\in \B$ and hence
$\sigma(s_g)\in \A$, a remainder of
$$\csp((s_g)_{g \in G}):=\sum_{g\in G}\sigma(s_g)g$$ after division by $G$ is
zero.
\end{enumerate}
\end{theorem}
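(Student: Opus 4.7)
The plan follows the classical Buchberger template. The easy direction (i) $\Rightarrow$ (ii) is immediate, since each $\csp((s_g)_{g \in G})$ lies in $\J$, and Proposition \ref{Insaporp2} forces its remainder modulo a \gr basis to vanish.

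For (ii) $\Rightarrow$ (i), I would pick any $f \in \J$, write $f = \sum_{g \in G} h_g g$, and choose the representation minimizing $\beta := \max_\prec\{\deg(h_g g) \mid h_g \neq 0\}$; such a minimum exists because $\prec$ well-orders $\N^n$. Always $\deg(f) \preceq \beta$, and if equality holds, reading off degree-$\beta$ coefficients immediately expresses $\lc(f)$ in the $\R$-module required by the definition of a \gr basis. Otherwise the contributions of degree $\beta$ in $\sum_g h_g g$ must cancel. Setting $\init(h_g) = a_g D^{\mu_g}$ for $g \in G_\beta := \{g : \deg(h_g g) = \beta\}$ and $s_g := a_g Y^{\mu_g} \in \B$ (and $s_g := 0$ otherwise) yields a tuple $(s_g)_{g\in G} \in \syz_{\B}(H_G)$ homogeneous of degree $\beta$. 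By Lemma \ref{homo}, I may write $(s_g)_{g \in G} = \sum_{\s \in C_G} r_\s \s$ with each $r_\s \s$ homogeneous of degree $\beta$.

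The reduction step splits
$$f \;=\; \csp((s_g)_{g\in G}) \;+\; \sum_{g \notin G_\beta} h_g g \;+\; \sum_{g \in G_\beta} (h_g - \init(h_g)) g,$$
whose last two sums already have every term of degree $\prec \beta$. Rewriting $\csp((s_g)_{g\in G})$ as $\sum_g H_g g$ with $\max_g \deg(H_g g) \prec \beta$ would contradict the minimality of $\beta$, forcing $\deg(f) = \beta$ and finishing the proof. The main obstacle is that $\sigma$ is only an $\R$-homomorphism, not a $\B$-algebra map, so $\csp$ does not distribute over the decomposition $(s_g) = \sum_\s r_\s \s$. Quasi-Commutativity of $\A$ is what saves the day: for each homogeneous piece the discrepancy $\sigma(r_\s)\sigma(\s_g) - \sigma(r_\s \s_g)$ has strictly smaller degree than the common top degree, so after multiplying by $g$ it contributes a term of degree $\prec \beta$. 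Hence
$$\csp((s_g)_{g\in G}) \;=\; \sum_{\s \in C_G} \sigma(r_\s)\,\csp(\s) \;+\; (\text{terms of degree} \prec \beta),$$
and hypothesis (ii) together with the division property yields $\csp(\s) = \sum_g h_g^\s g$ with each $\deg(h_g^\s g)$ strictly less than the homogeneous degree of $\s$; multiplying by $\sigma(r_\s)$ then produces contributions of degree $\prec \beta$, completing the descent.
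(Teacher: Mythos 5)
Your proof is correct and follows essentially the same route as the paper's: choose a representation minimizing the top degree $\beta$, extract the homogeneous syzygy $(s_g)$, decompose it via Lemma~\ref{homo} over the commutative generators $C_G$, absorb the quasi-commutativity error $\sigma(r_\s s_g)-\sigma(r_\s)\sigma(s_g)$ as a lower-degree correction, and then use hypothesis (ii) on each $\csp(\s)$ to descend below $\beta$. The only point you pass over quickly is the (true, but needing a line) fact that $\deg(\csp(\s))\prec\beta_\s$ because $\s$ is a syzygy of $H_G$, which is what the division property needs in order to deliver $\deg(h_g^\s g)\prec\beta_\s$; the paper makes this explicit, but otherwise your grouping of terms and the overall descent are the same argument.
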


\begin{proof}
(i)$\Rightarrow$(ii): It follows from Proposition \ref{Insaporp2}.

(ii)$\Rightarrow$(i): Let $h\in \J$. It suffices to show:
$$\lc(h)\in {}_{\R}\langle \lc(g)\mid g\in G, \deg(h)\in deg(g)+\N^n \rangle.$$
For a family $(f_g)_{g\in G}$ in $\A$, define
$$\delta((f_g)_{g\in G}):=\max_{\prec}\{\deg(f_g)+\deg(g)\mid g\in G\}.$$

Since $h\in \J$, there exists a family $(h_g)_{g\in G}$ in $\A$
such that $h=\sum_{g\in G}h_gg$. Choose $(h_g)_{g\in G}$ such that
$$\delta := \delta((h_g)_{g\in G}) \mbox{ is minimal,}$$ which
implies if $(h'_g)_{g\in G}$ is such that $h=\sum_{g\in G}h'_gg$,
then $\delta\preceq \delta((h'_g)_{g\in G})$.

Let $E:=\{g\in G\mid \deg(h_g)+\deg(g)=\delta\}\subset G$.

\noindent {\bf  Case 1}: $\deg(h)=\delta$. Then
$$\init(h)=\sum_{g\in E} \init(h_gg) \mbox{ and }
\lc(h)=\sum_{g\in E} \lc(h_g)\lc(g)\in {}_{\R}\langle \lc(g) \mid
g\in E \rangle.$$ If $g\in E$, then
$\deg(h)=\delta=\deg(h_g)+\deg(g)$ and hence $\deg(h)\in
\deg(g)+\N^n$. Therefore, $\lc(h)\in {}_{\R}\langle \lc(g)\mid
g\in G, \deg(h)\in \deg(g)+\N^n \rangle$.

\noindent{\bf  Case 2}: $\deg(h)\prec \delta$. Then
$$\sum_{g\in E} \init(h_gg)=0, \mbox{ which implies } \sum_{g\in E} \lc(h_g)\lc(g)=0.$$
Combined with the fact that $\deg(h_g)+\deg(g)=\delta$ for $g\in
E$, it follows $$0=\sum_{g\in E}\lc(h_g)\lc(g)Y^\delta=\sum_{g\in
E}\lc(h_g)Y^{\deg(h_g)}\lc(g)Y^{\deg(g)}\in \B.$$ Denote
$$t_g:=\left\{\begin{array}{cl} \lc(h_g)Y^{\deg(h_g)},  & \ \ \  g\in E, \\ 0, & \ \ \ g\in G\setminus E.
\end{array}\right. $$
Notice that
$$\sigma(t_g):=\left\{\begin{array}{cl} \init(h_g),  & \ \ \  g\in E, \\ 0, & \ \ \ g\in G\setminus E.
\end{array}\right. $$
Then $(t_g)_{g\in G}$ is a homogeneous element of $\syz_{\B}(H_G)$
with degree $\delta$. Since $C_G$ is a set of homogeneous
generators of $\syz_{\B}(H_G)$, by lemma \ref{homo}, there exists
a family $(r_\s)_{\s\in C_G}$ where $r_\s\in \B$, such that
$(t_g)_{g\in G}=\sum_{\s\in C_G}r_\s \s$ and $r_\s \s$ is
homogeneous of degree $\delta$, i.e. for $\forall g\in G$,
$$t_g=\sum_{\s\in C_G}r_\s s_g, \mbox{ where } \s=(s_g)_{g\in G},$$
and for $\forall g\in G$, $\forall \s\in C_G$,
$$\deg(r_\s)+\deg(s_g)+\deg(g)=\delta \mbox{ whenever } r_\s s_g\not=0.$$
Remark that all $t_g, r_\s, s_g\in \B$.

Now $$h=\sum_{g\in G}h_g g=\sum_{g\in E}h_g g+\sum_{g\in
G\setminus E}h_g g$$ \begin{equation}\label{eq1} =(\sum_{g\in
E}h_g g - \sum_{g\in G} \sum_{\s\in C_G} \sigma(r_\s) \sigma(s_g)
g) + \sum_{g\in G} \sum_{\s\in C_G} \sigma(r_\s) \sigma(s_g) g +
\sum_{g\in G\setminus E}h_g g.\end{equation}

For the {\bf FIRST} sum in (\ref{eq1}), $$\sum_{g\in E}h_g g -
\sum_{g\in G} \sum_{\s\in C_G} \sigma(r_\s) \sigma(s_g) g =
\sum_{g\in E}\init(h_g) g - \sum_{g\in G} \sum_{\s\in C_G}
\sigma(r_\s) \sigma(s_g) g + \sum_{g\in E}(h_g - \init(h_g)) g$$
$$=\sum_{g\in G}\sigma(t_g) g - \sum_{g\in G} \sum_{\s\in C_G}
\sigma(r_\s) \sigma(s_g) g + \sum_{g\in E}(h_g - \init(h_g)) g$$
$$=\sum_{g\in G} (\sigma(t_g)  - \sum_{\s\in C_G} \sigma(r_\s)
\sigma(s_g)) g + \sum_{g\in E}(h_g - \init(h_g)) g$$
\begin{equation}\label{eq2}
=\sum_{g\in G} \sum_{\s\in C_G} (\sigma(r_\s s_g) - \sigma(r_\s)
\sigma(s_g)) g + \sum_{g\in E}(h_g - \init(h_g)) g.\end{equation}
Since $\init(\sigma(r_\s s_g)) = \init(\sigma(r_\s) \sigma(s_g))$,
then for $\forall g\in G$, $\forall \s\in C_G$,
$$\deg((\sigma(r_\s s_g) - \sigma(r_\s)\sigma(s_g))g) \prec
\deg(\sigma(r_\s)\sigma(s_g) g) = \deg(r_\s)+\deg(s_g)+\deg(g)
=\delta,$$ whenever $r_\s s_g\not=0$. In case of $r_\s s_g=0$ and
$\sigma(r_\s)\sigma(s_g)\not=0$, $\lc(r_\s s_g)=0$ implies
$\lc(\sigma(r_\s)\sigma(s_g))=0$, so the above inequation holds as
well. Besides, clearly for $\forall g\in E$,
$$\deg((h_g - \init(h_g)) g)\prec \deg(h_gg)=\delta.$$

For the {\bf SECOND} sum in (1), $$\sum_{g\in G} \sum_{\s\in C_G}
\sigma(r_\s) \sigma(s_g) g = \sum_{\s\in C_G} \sigma(r_\s)
(\sum_{g\in G} \sigma(s_g) g).$$ For each $\s=(s_g)_{g\in G}\in
C_G$, assume $\s$ is homogeneous of degree ${\beta_{\s}}$, then
${\beta_{\s}}=\deg(\sigma(s_g))+\deg(g)$ whenever
$\sigma(s_g)\not=0$, and consider
$$\sum_{g\in G} \sigma(s_g) g=\sum_{g\in G}
\init(\sigma(s_g)g)+\sum_{g\in G} (\sigma(s_g)g-
\init(\sigma(s_g)g))$$ $$= \sum_{g\in G}
\lc(\sigma(s_g))\lc(g)D^{\beta_{\s}}+\sum_{g\in G} (\sigma(s_g)g-
\init(\sigma(s_g)g)).$$ By the definition of $C_G$ and $\s$ is a
homogeneous element of $\syz_{\B}(H_G)$ with degree
${\beta_{\s}}$, then
$$0=\sum_{g\in G}s_gc_gY^{\alpha_g}=\sum_{g\in
G}\lc(s_g)c_gY^{\beta_{\s}} \mbox{ where
}\init(g)=c_gD^{\alpha_g}.$$ Notice that
$\lc(\sigma(s_g))=\lc(s_g)$, which implies
$$\sum_{g\in G} \lc(\sigma(s_g))\lc(g)D^{\beta_{\s}}=0.$$ Combined with
the fact $\deg(\sigma(s_g)g- \init(\sigma(s_g)g))\prec
{\beta_{\s}}$, the following inequation holds: $$\deg(\sum_{g\in
G} \sigma(s_g) g)\prec {\beta_{\s}}.$$ By (ii) a remainder of
$\sum_{g\in G} \sigma(s_g) g$ after division by $G$ is zero, i.e.
there exist families $(f_g(\s))_{g\in G}$ in $\A$, such that
$$\sum_{g\in G} \sigma(s_g) g = \sum_{g\in G}f_g(\bar{s})g,$$ and
$\deg(f_g(\bar{s}) g)\preceq \deg(\sum_{g\in G} \sigma(s_g)
g)\prec {\beta_{\s}}$. So the second sum in (1) turns out to be
 $$\sum_{g\in G} \sum_{\s\in C_G}
\sigma(r_\s) \sigma(s_g) g = \sum_{\s\in C_G} \sigma(r_\s)
(\sum_{g\in G} \sigma(s_g) g) = \sum_{\s\in C_G} \sigma(r_\s)
(\sum_{g\in G}f_g(\s)g) $$ \begin{equation} \label{eq3} =
\sum_{g\in G} \sum_{\s\in C_G} \sigma(r_\s)f_g(\s)g\end{equation}
and for $\forall g\in G$, $\forall \s\in C_G$,
$$\deg(\sigma(r_\s)f_g(\s)g)\prec \deg(\sigma(r_\s))+\beta_\s =
\delta \mbox{ whenever } r_\s \not=0.$$

For the {\bf THIRD} sum in (1), by the definition of $E$, it is
obvious that $\deg(h_gg)\prec \delta$ for $g\in G\setminus E$.

Based on the expressions in (2) and (3), let
$$h'_g:=\left\{\begin{array}{cl} \sum_{\s\in C_G} (\sigma(r_\s s_g) - \sigma(r_\s)
\sigma(s_g)+\sigma(r_\s)f_g(\s))+(h_g - \init(h_g)), & \ \ \  g\in
E,
\\ \sum_{\s\in C_G} (\sigma(r_\s s_g) - \sigma(r_\s)
\sigma(s_g)+\sigma(r_\s)f_g(\s))+h_g, & \ \ \ g\in G\setminus E.
\end{array}\right. $$
Then it is easy to verify that $h=\sum_{g\in G} h'_gg$ and
$\delta((h'_g)_{g\in G})\prec \delta$, which is a contradiction to
the minimality of $\delta$. Hence case 2 never occurs.

To sum up, the theorem is proved.
\end{proof}

The above theorem provides a more fundamental criterion than Insa
and Pauer's original theorem, since it suffices to consider the
``s-polynomials" constructed from a set of commutative syzygy
generators of $\init(G)$. As we will see in the next section, Insa
and Pauer's original theorem only provides a method for computing
the set $C_G$. Thus the new theorem is more essential and the Insa
and Pauer's theorem can be concluded as its natural corollary.

In fact, the main theorem extends much more generally.

\begin{theorem}
The main theorem is true for all rings with the quasi-commutative
property.
\end{theorem}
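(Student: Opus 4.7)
The plan is to abstract the proof of Theorem \ref{mainthm} and identify the minimal algebraic structure that was actually used, then verify that all hypotheses can be phrased in terms of quasi-commutativity alone. First I would isolate the abstract setup: an associative $\R$-algebra $\A$ equipped with an admissible order $\prec$ on $\N^n$, a degree function $\deg:\A\setminus\{0\}\to\N^n$, a leading coefficient $\lc(\cdot)\in\R$, and an initial $\init(\cdot)$ satisfying the multiplicative rules $\deg(fg)=\deg(f)+\deg(g)$, $\lc(fg)=\lc(f)\lc(g)$ together with the quasi-commutative property $\deg(fg-gf)\prec\deg(fg)=\deg(gf)$. I would also retain the solvability assumptions on $\R$ and, for the division step, the analogue of Proposition 3.1, which only uses filtration by $\prec$ and well-foundedness, so it transfers without change.

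Next I would reconstruct the commutative companion ring $\B$ and the $\R$-linear map $\sigma$ in this abstract setting. The key observation is that the associated graded ring $\mathrm{gr}_\prec(\A)$ with respect to the filtration by $\prec$-degree is commutative precisely because of quasi-commutativity; concretely, one can define $\B$ as the commutative $\R$-algebra generated by symbols $Y^{\alpha}$ indexed by the possible degrees appearing in $\A$, with an $\R$-linear lift $\sigma:\B\to\A$ obtained from any choice of $\R$-basis of $\A$ compatible with the filtration. By construction, $\sigma$ satisfies $\deg(f)=\deg(\sigma(f))$, $\lc(f)=\lc(\sigma(f))$, $\sigma(\init(f))=\init(\sigma(f))$, and $\R$-linearity $\sigma(rf+g)=r\sigma(f)+\sigma(g)$; and the crucial ``defect'' identity $\deg\bigl(\sigma(fg)-\sigma(f)\sigma(g)\bigr)\prec\deg(f)+\deg(g)$ follows directly from quasi-commutativity applied to the products $\sigma(f)\sigma(g)$ in $\A$ versus their commutative images in $\B$.

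With $\B$ and $\sigma$ in hand, the proof of the main theorem can then be reread verbatim. The implication (i)$\Rightarrow$(ii) still follows from Proposition \ref{Insaporp2}, whose proof uses only the abstract division. For (ii)$\Rightarrow$(i), the minimality argument on $\delta((h_g)_{g\in G})$ relies on only four ingredients: the admissibility of $\prec$; the $\R$-linearity and degree/initial compatibility of $\sigma$; the fact that $\sigma(r_\s s_g)-\sigma(r_\s)\sigma(s_g)$ has strictly smaller degree than $\deg(r_\s)+\deg(s_g)+\deg(g)$, which is exactly the defect identity above; and the existence of homogeneous generators of $\syz_{\B}(H_G)$ (Lemma \ref{homo}), which is available in any commutative ring with an admissible monomial order. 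Each of the three sums in equation (\ref{eq1}) is then re-estimated exactly as before.

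The main obstacle I anticipate is not in the algebraic manipulation but in formulating what a ``ring with the quasi-commutative property'' should mean once one strips away the differential-operator structure: one needs to specify how the exponent monoid $\N^n$, the coefficient ring $\R$, and the multiplicative structure on leading parts are built into the abstract axioms so that a commutative companion $\B$ genuinely exists and its syzygies are computable. The most delicate step is verifying that the $\R$-linear lift $\sigma$ can be chosen so that the defect $\sigma(fg)-\sigma(f)\sigma(g)$ is controlled uniformly in degree; once this is done, all other steps of the proof translate mechanically.
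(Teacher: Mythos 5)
Your proposal takes the same route as the paper's own proof---revisit the proof of the main theorem and check that every step only uses a small package of structural properties---but the paper's argument is a single sentence (``only the quasi-commutative property is used''), whereas you actually unpack what that sentence would have to mean. You are right to do so: as stated, the theorem is not well-posed, because quasi-commutativity alone ($\deg(fg-gf)\prec\deg(fg)=\deg(gf)$) says nothing about degree additivity, multiplicativity of $\lc$, existence of a well-founded division, or solvability of linear equations over $\R$, all of which the proof of Theorem~\ref{mainthm} visibly uses; your explicit axiom list surfaces these latent hypotheses. Your isolation of the ``defect identity'' $\deg\bigl(\sigma(fg)-\sigma(f)\sigma(g)\bigr)\prec\deg(f)+\deg(g)$ as the single analytic fact powering Case 2 is exactly the right abstraction, though the phrase ``follows directly from quasi-commutativity'' is a little loose---it follows from quasi-commutativity \emph{together with} the multiplicative rules on $\deg$ and $\lc$ that you already listed, and in the abstract setting it also needs $\A$ to be a free $\R$-module on monomials so that $\sigma$ exists and respects initials, which is a genuine extra hypothesis (automatic for $\R[D]$, not automatic in general). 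You are candid about this last point, which is to your credit; the paper simply ignores it. In short: same approach, but your version is the proof the paper should have written, and the additional axioms you identify are genuinely needed rather than optional, so a cleaner formulation of the theorem would build them into the definition of ``ring with the quasi-commutative property.''
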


\begin{proof}
In the proof of the main theorem, only the quasi-commutative
property is used.
\end{proof}

Similar to the Insa and Pauer's approach, one can also develop an
algorithm for computing \gr basis of ${}_{\A} \langle F \rangle$
for any given $F\subset \A$ based on the main theorem.  According
to theorem \ref{mainthm}, it suffices to compute {\bf one} set of
commutative syzygy generators of $\init(F)$ in the commutative
ring $\B$, instead of computing the generators of $\syz_{\R}(E)$
for {\bf all} subsets $E\subset F$. Clearly, Insa and Pauer's
method leads to more computations than needed. To illustrate this,
let see the following example which is from \citep{Zhou07}.

\begin{example} \label{exam}
Let $\R=\Q[x_1,\cdots, x_6]$, $\A=\R[D_1, \cdots, D_6]$ and J the
left ideal of $\A$ generated by $F=\{f_1, f_2, f_3, f_4\}$, where
$f_1=x_1D_4+1, f_2=x_2D_5, f_3=(x_1+x_2)D_6, f_4=D_5D_6$. Let
$\prec$ be the graded lexicographic order with $(1, 0, \cdots, 0)
\prec (0, 1, \cdots, 0) \prec (0, \cdots, 0, 1)$.

By Insa and Pauer's theorem, in order to compute a \gr basis for
${}_{\A} \langle F \rangle$, one needs to consider the following
``generalized s-polynomials" (duplicated cases are omitted):
$$\sp(\{f_1, f_2\}, (x_2, -x_1))=x_2D_5f_1-x_1D_4f_2,$$
$$\sp(\{f_1, f_3\}, (x_1+x_2, -x_1))=(x_1+x_2)D_6f_1-x_1D_4f_3,$$
$$\sp(\{f_1, f_4\}, (1, -x_1))=D_5D_6f_1-x_1D_4f_4,$$
$$\sp(\{f_2, f_3\}, (x_1+x_2, -x_2))=(x_1+x_2)D_6f_2-x_2D_5f_3,$$
$$\sp(\{f_2, f_4\}, (1, -x_2))=D_6f_2-x_2f_4,$$
$$\sp(\{f_3, f_4\}, (1, -(x_1+x_2)))=D_5f_3-(x_1+x_2)f_4,$$
$$\sp(\{f_1, f_2, f_3\}, (x_2, -x_1, 0))=x_2D_5D_6f_1-x_1D_4D_6f_2,$$
$$\sp(\{f_1, f_2, f_3\}, (1, 1, -1))=D_5D_6f_1+D_4D_6f_2-D_4D_5f_3,$$
$$\sp(\{f_1, f_2, f_4\}, (0, 1, -x_2))=D_4D_6f_2-x_2D_4f_4,$$
$$\sp(\{f_1, f_3, f_4\}, (x_1+x_2, -x_1, 0))=(x_1+x_2)D_5D_6f_1-x_1D_4D_5f_4,$$
$$\sp(\{f_1, f_3, f_4\}, (1, -1, x_2))=D_5D_6f_1-D_4D_5f_3+x_2D_4f_4,$$
$$\sp(\{f_2, f_3, f_4\}, (1, -1, x_1))=D_6f_2-D_5f_3+x_1f_4.$$

By Zhou and Winkler's trick, $\sp(\{f_1, f_2, f_4\}, (0, 1,
-x_2))$, $\sp(\{f_1, f_3, f_4\}, (x_1+x_2, -x_1, 0))$, $\sp(\{f_1,
f_3, f_4\}, (1, -1, x_2))$ and $\sp(\{f_2, f_3, f_4\}, (1, -1,
x_1))$ can be removed.

However, according to the new theorem, $\B=\R[y_1, \cdots, y_6]$
and $H_F=\{x_1y_4, x_2y_5, (x_1+x_2)y_6, y_5y_6\}$. Then
$$C_F=\{\s_1, \s_2, \s_3, \s_4, \s_5\}=\{(x_2y_5, -x_1y_4, 0, 0), ((x_1+x_2)y_6, 0, -x_1y_4, 0),$$
$$(y_5y_6, 0, 0, -x_1y_4), (0, y_6, 0, -x_2), (0, 0, y_5,
-(x_1+x_2))\},$$ is a set of commutative syzygy generators of
$\init(F)$. Therefore, in the new method, it suffices to consider:
$$\csp(\s_1)=x_2D_5f_1-x_1D_4f_2,$$
$$\csp(\s_2)=(x_1+x_2)D_6f_1-x_1D_4f_3,$$
$$\csp(\s_3)=D_5D_6f_1-x_1D_4f_4,$$
$$\csp(\s_4)=D_6f_2-x_2f_4,$$
$$\csp(\s_5)=D_5f_3-(x_1+x_2)f_4.$$

\ignore{
$$\csp(\s_1)=x_2D_5f_1-x_1D_4f_2=\sp(\{f_1, f_2\}, (x_2, -x_1)),$$
$$\csp(\s_2)=(x_1+x_2)D_6f_1-x_1D_4f_3=\sp(\{f_1, f_3\}, (x_1+x_2, -x_1)),$$
$$\csp(\s_3)=D_5D_6f_1-x_1D_4f_4=\sp(\{f_1, f_4\}, (1, -x_1)),$$
$$\csp(\s_4)=D_6f_2-x_2f_4=\sp(\{f_2, f_4\}, (1, -x_2)),$$
$$\csp(\s_5)=D_5f_3-(x_1+x_2)f_4=\sp(\{f_3, f_4\}, (1,
-(x_1+x_2))).$$}

No matter in either Insa and Pauer's method or Zhou and Winkler's
improved version, one has to compute the remainders of $\sp(\{f_2,
f_3\}, (x_1+x_2, -x_2))$ and $\sp(\{f_1, f_2, f_3\},\\ (1, 1,
-1))$ all the time, which are not needed any more in the new
method. Therefore, the new method avoids all these unnecessary
computations and hence has better efficiency.

To finish this example, it is easy to check that all the
remainders of $\csp(\s_i)$ after division by $F$ are zero. So $F$
itself is a \gr basis for ${}_{\A}\langle F \rangle$.
\end{example}

\section{On Computing $C_G$ over $\R[Y]$}

So far, as shown by the main theorem \ref{mainthm}, in order to
check if a set of differential operators $G$ is a \gr basis for
${}_{\A}\langle G \rangle$, it only needs to consider the
``s-polynomials" deduced by $C_G$, which is a set of commutative
syzygy generators of $\init(G)$. Now the last question is {\bf how
to compute the set $C_G$ over $\R[Y]$?}

By the definition of $C_G$, it is a set of homogeneous generators
of $\syz_{\B}(H_G)$ which is a syzygy module of monomials in
$\B=\R[Y]$. In fact, Insa and Pauer's theorem implies a natural
method to compute it. That is, the set
$$\{(s_eY^{m(E)-\deg(e)})_{e\in E} \mid (s_e)_{e\in E}\in S_E, E\subset G \},$$
where $S_E$ is a set of generators of $\syz_{\R}(E)=\{(s_e)_{e\in
E} \mid \sum_{e\in E}s_e \lc(e)=0, s_e\in \R\}$ and
$m(E)=(\max_{e\in E}\deg(e)_1, \cdots, \max_{e\in E}\deg(e)_n)\in
\N^n,$ extends to a set of generators of $\syz_{\B}(H_G)$
naturally. But example \ref{exam} shows this set is not minimal in
general.

Since $\B=\R[Y]$ is a commutative ring, there are many
sophisticated results on computing the syzygy of monomials in
$\B$, such as the techniques in \citep{Adams94}. Also Zhou and
Winkler's trick can be exploited for this purpose. Here we only
mention two special cases.

\begin{enumerate}
\item $\R$ is a field:

When $\R$ is a field, the following set $$\{(\lc(g)Y^{m(f,
g)-\deg(f)}, -\lc(f)Y^{m(f, g)-\deg(g)}) \mid f, g\in G\}$$
extends to a set of generators of $\syz_{\B}(H_G)$.

\item $\R$ is the polynomial ring $K[X]$:

Since the variables $X$ commute with $Y$, $C_G$ can be obtained by
computing the generators of $\syz_{\B}(H_G)$ in the polynomial
ring $K[X, Y]$. Notice that $H_G=\{c_gY^{\alpha_g} \mid g\in G\}
\subset K[X,Y]$. We can obtain a finite set of generators for
$\{(s_g)_{g\in G} \mid \sum_{g\in G} s_g c_gY^{\alpha_g}=0, s_g\in
K[X, Y]\}$ in the polynomial ring $K[X, Y]$ and denote it by $S$.
It is straightforward to check that $S$ is also a set of
generators for $\syz_{\B}(H_G)$ when considered in $K[X][Y]$. Then
the collection of all homogeneous parts of $S$ is a set of
homogeneous generators for $\syz_{\B}(H_G)$, since
$\syz_{\B}(H_G)$ itself is a graded syzygy module in
$(K[X][Y])^{|H_G|}$.
\end{enumerate}

\section{Conclusion}

In this paper, a new theorem which determines if a set of
differential operators is a \gr basis in the ring of differential
operators is proposed. This new theorem is so essential that the
original Insa and Pauer's theorem can be concluded as its natural
corollary. Furthermore, based on the new theorem, a new method for
computing \gr basis in rings of differential operators is deduced.
The new method avoids many unnecessary computations naturally and
hence has better efficiency than other well-known methods.

\end{document}